\newcommand{\done}[1]{}
\newtheorem{theorem}{Theorem}[section]
\newtheorem{example}[theorem]{Example}
\newtheorem{definition}[theorem]{Definition}
\newtheorem{definition-proposition}{Definition-Proposition}[section]
\newtheorem{definition-lemma}[theorem]{Definition-Lemma}
\newcommand{\lr}{\mathbin{\lrcorner}}
 \title{Structured Gaussians From Geometric Quantisation}
\author{Kerr Maxwell}
\address[A1]{School of Physics \& Astronomy, University of Birmingham, UK, B15 2TT}
\address[A1]{EPSRC CDT in Topological Design, University of Birmingham, UK, B15 2TT}
\email{kxm147@bham.ac.uk}
\date{\today}
\begin{document}

\begin{abstract}
We develop a geometric description of structured Gaussian beams, a form a structured light, by applying geometric quantisation and symplectic reduction to the 2D harmonic oscillator. Our results show that the geometric quantisation of the oscillator’s reduced phase space coincides with the modal Poincar\'e sphere in optics. We explicitly consider the case of the Generalised Hermite-Laguerre-Gauss modes, identifying their interbasis expansions with rotations of the reduced phase space and the geometric data accompanying the quantisation. This description simplifies the presentation of $SU(2)$ symmetries in structured light beams and is extensible to other symmetry groups.
\end{abstract}

\maketitle

\section{Introduction}\label{sec:introduction}
The study of scalar \textit{structured light} \cite{forbes2021structured} is concerned with understanding and manipulating families of solutions of the paraxial wave equation
\begin{equation}\label{eqn:pwe}
\left(\partial_x^2+\partial_y^2+2ik\partial_z\right)\psi = 0,
\end{equation}
where $\psi:\mathbb{R}^3\rightarrow \mathbb{C}$ describes a highly directional propagating monochromatic electromagnetic wave with wavenumber $k\in\mathbb{R}$. Such an equation is well-suited to describing radiation emitted by a laser, such as the fundamental Gaussian mode \cite{born2013principles}.
Different families of solutions to \eqref{eqn:pwe} - possessing a variety of optical properties - can be found, usually by separating \eqref{eqn:pwe} in different coordinate systems. Such solutions are generally called \textit{structured light modes} \cite{forbes2021structured} because their coordinate representations are typically expressed in terms of special functions.
Many distinct separable families of structured light modes have been studied; of particular importance are the Laguerre-Gaussian modes \cite{gbur2016singular} and Bessel Beams \cite{durnin1987diffraction}, as they carry orbital angular momentum \cite{allen1992orbital}, an additional dynamical degree of freedom not found in the fundamental Gaussian beam. Such new degrees of freedom and an improved understanding of how to control optical systems has seen structured light lead technological advances in areas such as optical manipulation \cite{simpson1997mechanical,lalaguna2024spatial} and trapping \cite{he1995optical}.

In this paper, we demonstrate how a particular 2-parameter family of structured light modes, the \textit{Generalised Hermite-Laguerre-Gauss} (GG) modes \cite{abramochkin2004generalized}, and their interbasis expansions admit a geometric interpretation. These modes are of particular interest as, while they reduce to the well-studied Hermite-Gaussian and Laguerre-Gaussian modes in special cases, generally they are not separable in any planar orthonormal coordinate system. In particular, we show (theorem \ref{thm:main}) how the interbasis expansions of GG modes in terms of the representation matrices of $SU(2)$ \cite{abramochkin2010generalized} may be computed from the geometric quantisation of the reduced phase space of the 2D harmonic oscillator. In doing so, the GG modes are found to be naturally parametrised by a collection of spheres equipped with some auxiliary geometric data defining a quantisation. Interbasis expansions of the modes are then identified with transformations of $S^2$ and the auxiliary geometric data accompanying the quantisation.

By interpreting \eqref{eqn:pwe} as a Schr\"odinger equation in $\hbar=1/k$, we can study the optical analogue of ``classical solutions" which may then be quantised to produce solutions to \eqref{eqn:pwe}. We are particularly interested in Gaussian solutions to \eqref{eqn:pwe}, defined as eigenfunctions of the 2D harmonic oscillator (2DSHO) Hamiltonian
\begin{equation}\label{eqn:2dsho}
    H=\frac{1}{2}\left(p_x^2+p_y^2+q_x^2+q_y^2\right),
\end{equation}
conceived as an operator in Fourier optics \cite{dennis2017swings}. In the optics literature, $H$ is identified with the \textit{beam quality factor} (usually denoted $M^2$) \cite{siegman1986lasers}. To obtain an explicit basis of Gaussian solutions to \eqref{eqn:pwe}, a second observable must also be supplied. Usually, this will be the separation constant of the coordinate system in question, the emergence of angular momentum in cylindrical coordinates being the most obvious example. To describe the 2-parameter family of $GG$ modes, we encode the second observable as geometric information 
 in the quantisation of the reduced phase space associated with $H$. Finally, we will identify this reduced space with the \textit{modal Poincar\'e sphere} \cite{padgett1999poincare,dennis2017swings,alonso2017ray,dennis2019gaussian} that has seen much recent attention in structured light (see \cite{he2022towards} for a review, including a discussion of the related \textit{modal Majorana sphere} \cite{gutierrez2020modal}).

In section \ref{sec:background} we summarise the background material on symplectic reduction and geometric quantisation. In section \ref{sec:js} we perform the geometric quantisation of \eqref{eqn:2dsho} over its original phase space, then the reduced phase space in section \ref{sec:redQuant}. Finally, in section \ref{sec:moderotate} we derive the interbasis expansions of the previous quantisations and show they agree with those of the GG modes. While our primary aim is to prove theorem \ref{thm:main}, we also try to show our calculations in sufficient detail that physicists -- so inclined -- may adapt these ideas to other systems.

\section{Background}\label{sec:background}

\subsection{Symplectic Reduction}\label{sec:symplectic_reduction}
Let $(M,\omega)$ be a symplectic manifold. Symplectic reduction studies the structure-preserving quotients of $M$ by the action of a Lie group $G$. This operation becomes particularly interesting when $G$ acts in a \textit{Hamiltonian} fashion, giving the quotient manifold, also called the \textit{reduced phase space}, the interpretation of the moduli space of $G$-orbits. The original formulation of symplectic reduction is due to Marsden and Weinstein \cite{marsden1974reduction} and can be found in many standard reference works (see \cite{mcduff2017introduction} or \cite{abraham1978foundations, marsden2013introduction}).

\begin{definition}
    Let $(M,\omega)$ be a symplectic manifold and $\mathfrak{X}(M)$ be the set of smooth vector fields on $M$. We say $X\in\mathfrak{X}(M)$ is a symplectic vector field if $X \lr\omega$ is closed. Given a smooth function $H:M\rightarrow \mathbb{R}$ we call $X_H\in\mathfrak{X}(M)$ a Hamiltonian vector field with Hamiltonian $H$ iff it satisfies 
    \begin{equation}
        X\lr\omega+dH=0.
    \end{equation}
\end{definition}

Note that different sign conventions for the definition of $X_H$ are used in the literature. Given a Hamiltonian $H$, the integral curves of $X_H$ are the solutions of Hamilton equations $X_H(z) = \dot{z}$ for $z\in M$. In principle, these equations may be integrated to find an associated $G$-action. To go the other way, we need to consider the Lie algebra.
\begin{definition}
    Let $\Phi_g:G\times M\rightarrow M$ be a group action parametrised by a group element $g\in G$. Let $\xi\in\mathfrak{g}$ be an element of the Lie algebra $\mathfrak{g}$ of $G$ such that $e^{t\xi}=g$ for $t\in\mathbb{R}$. The action $\Phi$ is called Hamiltonian if the generator
    \begin{equation*}
        X_\xi : = \frac{d}{dt}\Big|_{t=0}\Phi_{e^{t\xi}}.
    \end{equation*}
    is a Hamiltonian vector field and the action is equivariant with respect to the adjoint action.
\end{definition}

Given a Hamiltonian action of $G$ on $M$, its flow gives rise to constants of the motion, described using moment maps. 

\begin{definition}
    Let $\Phi$ be a Hamiltonian action of $G$ on $M$ and choose a pairing $\langle\_,\_\rangle:\mathfrak{g}^*\times\mathfrak{g}\rightarrow \mathbb{R}$. A moment map $\mu$ for $\Phi$ is a $G$-equivariant smooth map $\mu:M\rightarrow\mathfrak{g}^*$ which satisfies
    \begin{equation*}
        H_\xi(m) = \langle\mu(m),\xi \rangle
    \end{equation*}
    at every point $m\in M$.
\end{definition}
In particular, one finds there is a Lie algebra homomorphism from $\mathfrak{g}$ to the Poisson algebra of smooth functions on $M$ via $\langle\mu(m),[\xi,\eta] \rangle = \omega(X_\xi,X_\eta)(m)$ for $\xi,\eta\in M$. We now have enough Hamiltonian structure developed to describe reduction.

\begin{theorem}[Marsden-Weinstein Reduction, 5.4.15 in \cite{mcduff2017introduction}]
	Let $\Phi$ be a free and proper Hamiltonian $G$ action on $M$ with moment map $\mu$. Let $\Phi$ act freely and properly on $\mu^{-1}(\mathcal{O})$ where $\mathcal{O}\subset \mathfrak{g}^*$ is a coadjoint orbit and every point in $\mathcal{O}$ is a regular value of $\mu$, i.e. $\mu^{-1}(\mathcal{O})$ is a submanifold. Then the quotient
	\begin{equation*}
		M_{\mathcal{O}}:=\mu^{-1}(\mathcal{O})/G
	\end{equation*}
	is a symplectic manifold, called the reduced phase space. The symplectic structure $\overline{\omega}$ on $M_\mathcal{O}$ is uniquely given by restriction and inclusion (see the \textit{shifting trick} in the literature)
	\begin{equation*}
		\overline{\omega}_{[m]}([v_1],[v_2]): = \omega(v_1,v_2) - \langle\mu(m),[\xi_1,\xi_2]\rangle
	\end{equation*}
    for $v_i\in T_m\mu^{-1}(\mathcal{O})$, where $\xi_i$ are chosen such that they satisfy the shifted moment condition
    \begin{equation*}
        d\mu(p)(v_i-X_{\xi_i})=0.
    \end{equation*}
\end{theorem}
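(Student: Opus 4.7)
The plan is to follow the standard Marsden--Weinstein strategy, handling the coadjoint-orbit case via the shifting trick so that the correction term in the definition of $\overline{\omega}$ is repackaged as a genuine reduction at zero. The argument splits naturally into (i) establishing the smooth manifold structure of $M_\mathcal{O}$, (ii) verifying that $\overline{\omega}$ descends to a well-defined form on the quotient, and (iii) checking that $\overline{\omega}$ is closed and non-degenerate.

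The smoothness step is essentially immediate. Since every point of $\mathcal{O}$ is a regular value of $\mu$ and $\mathcal{O}$ is an embedded submanifold of $\mathfrak{g}^*$ --- being a homogeneous space $G/G_\xi$ under the coadjoint action --- transversality gives that $\mu^{-1}(\mathcal{O})$ is an embedded submanifold of $M$. The free and proper hypothesis then makes $\pi : \mu^{-1}(\mathcal{O}) \to M_\mathcal{O}$ a principal $G$-bundle, hence a surjective submersion. As preparation for the analytic steps, I would first check that a $\xi_i$ satisfying the shifted moment condition always exists: by $G$-equivariance of $\mu$, the vectors $d\mu(m)(X_\xi)$ sweep out $T_{\mu(m)}\mathcal{O}$ as $\xi$ varies over $\mathfrak{g}$, so $\xi_i$ exists and is unique modulo the stabiliser algebra $\mathfrak{g}_{\mu(m)}$. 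Well-definedness of $\overline{\omega}$ then reduces to two invariance checks: altering $v_i$ by a vertical vector $X_\eta$ contributes $\omega(X_\eta, \cdot)=-d\langle \mu, \eta\rangle$, which cancels against the forced change in $\xi_i$; and altering $\xi_i$ by an element $\eta \in \mathfrak{g}_{\mu(m)}$ changes the correction term by $\langle \mu(m), [\eta, \xi_j]\rangle$, which vanishes by definition of the stabiliser.

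The cleanest route for the remaining points is to recognise $\pi^*\overline{\omega}$ as the restriction to $\mu^{-1}(\mathcal{O})$ of $\omega - \mu^*\omega_\mathcal{O}$, where $\omega_\mathcal{O}$ is the Kirillov--Kostant--Souriau form on $\mathcal{O}$. Closedness of $\overline{\omega}$ then follows from closedness of $\omega$ and $\omega_\mathcal{O}$ together with $\pi$ being a submersion. The main obstacle I anticipate is non-degeneracy: one must establish the symplectic-orthogonal identity $(T_m\mu^{-1}(\mathcal{O}))^{\omega} \subseteq T_m(G\cdot m) + \{X_\xi : \xi \in \mathfrak{g}\}$, from which an $\overline{\omega}$-null vector $[v_1]$ is forced to lift to a tangent vector to the $G$-orbit through $m$ and hence descends to zero in $M_\mathcal{O}$. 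Carrying the correction term $\langle \mu(m),[\xi_1,\xi_2]\rangle$ cleanly through this orthogonal computation is the delicate bookkeeping step; the shifting trick sidesteps it by identifying $(M_\mathcal{O},\overline{\omega})$ with the reduction at zero of $(M\times \mathcal{O}^-,\, \omega \oplus -\omega_\mathcal{O})$ under the diagonal $G$-action with moment map $\mu - \mathrm{pr}_2$, where non-degeneracy is the standard Marsden--Weinstein fact already available in the references cited.
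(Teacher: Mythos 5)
There is nothing in the paper to compare against here: the statement is quoted as background (Theorem 5.4.15 of McDuff--Salamon), and the paper gives no proof, only the citation and the parenthetical pointer to the shifting trick. Your sketch is the standard Marsden--Weinstein argument in its coadjoint-orbit form and is essentially sound, and it follows exactly the route the paper gestures at: transversality at regular values plus the free and proper action gives the principal bundle $\pi:\mu^{-1}(\mathcal{O})\to M_\mathcal{O}$; equivariance of $\mu$ gives existence of the $\xi_i$ modulo the stabiliser algebra $\mathfrak{g}_{\mu(m)}$; your two invariance checks are the right ones; and closedness via injectivity of $\pi^*$ for a surjective submersion, after identifying $\pi^*\overline{\omega}$ with the restriction of $\omega-\mu^*\omega_\mathcal{O}$ (with $\omega_\mathcal{O}$ the Kirillov--Kostant--Souriau form), is correct. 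Three small points to tighten. First, well-definedness needs one more check beyond the two you list: independence of the chosen representative $m$ in the fibre $\pi^{-1}([m])$, i.e.\ $G$-invariance of the restricted form $\iota^*\omega-\mu^*\omega_\mathcal{O}$ on $\mu^{-1}(\mathcal{O})$; this follows from invariance of $\omega$ and equivariance of $\mu$, but it should be said. Second, in your orthogonality statement $T_m(G\cdot m)$ and $\{X_\xi:\xi\in\mathfrak{g}\}$ are the same subspace, so the inclusion as written is vacuous-looking; what you actually need is $\ker\bigl(\iota^*\omega-\mu^*\omega_\mathcal{O}\bigr)_m\subseteq T_m(G\cdot m)$, which comes from the identity $\ker d\mu(m)=\bigl(T_m(G\cdot m)\bigr)^{\omega}$ together with $T_m\mu^{-1}(\mathcal{O})=\ker d\mu(m)+T_m(G\cdot m)$. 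Third, your final appeal to the zero-level Marsden--Weinstein theorem for non-degeneracy makes the argument a reduction of the orbit version to the point-reduction version rather than a self-contained proof; that is acceptable in this context, since the paper itself defers the whole statement to the cited reference, but a genuinely complete proof would have to carry out the kernel computation just described (either directly on $\mu^{-1}(\mathcal{O})$ or on the zero level set of $\mu-\mathrm{pr}_2$ in $M\times\mathcal{O}^-$).
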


This reduction procedure provides a powerful way to study Hamiltonian systems. To summarise, when a phase space (symplectic manifold) is acted upon by a group in such a way that it admits a moment map with the necessary technical properties, then the structure of Hamilton's equations is preserved by the natural quotient with respect to the group action on the level sets of the moment map. While the resulting reduced phase space will generally be more complicated topologically, the reduction in dimension significantly simplifies the dynamics. Some explicit examples of moment maps are discussed in section \ref{sec:qviar}.

\subsection{Geometric Quantisation}\label{sec:geometric_quantisation}
Originated by Kostant \cite{kostant1970quantization} and Souriau \cite{souriataur1966quantification}, the goal of geometric quantisation is to extend the canonical quantisation prescription 
\begin{equation*}
    q\mapsto \widehat{q}\,,\quad p\mapsto-i\hbar\frac{\partial}{\partial q}\,,
\end{equation*}
on $(q,p)\in T^*\mathbb{R}^n$ to more general phase spaces. The starting point for this is the observation that $H\mapsto X_H$ is a Lie algebra homomorphism
\begin{equation*}
    [X_f,X_g]=X_{\{f,g\}}.
\end{equation*}
By building a special complex line bundle $\mathbb{L}$ over $M$, this relationship between Hamiltonians and their vector fields can be extended to covariant derivatives of sections $s\in\Gamma(\mathbb{L})$. Given the prescription of some additional geometric data, a Hilbert space $\mathcal{H}$ of sections may be formed along with a collection of operators $\mathcal{O}(\mathcal{H})$ on $\mathcal{H}$ which satisfactory model quantisation. The main reference work for geometric quantisation is \cite{woodhouse1992geometric}, see also \cite{hall2013quantum}. A general discussion of quantisation techniques and the \textit{Dirac Axioms} as a model of quantisation is given in \cite{ali2005quantization}.

\begin{definition}\label{def:prequant}
    Let $(M,\omega)$ be a $2n$-dimensional symplectic manifold. A prequantum line bundle over $M$ is a Hermitian complex line bundle $\mathbb{C}\hookrightarrow \mathbb{L}\rightarrow M$ with compatible Hermitian connection $\nabla$ with $\text{curv}(\nabla)=\omega$.
\end{definition}

The Hermitian condition on $\mathbb{L}$ can be thought of as requiring $\omega$ to represent an integral cohomology class, or equivalently, that $c_1(\mathbb{L})=\frac{N}{2\pi\hbar}\omega$ for $N\in\mathbb{Z}$. If a particular $M$ supports a prequantum structure, we say $M$ is \textit{prequantisable}. Given a prequantisation of $M$, square-integrable sections of $\mathbb{L}$ form a natural Hilbert space, the \textit{prequantum Hilbert space}
\begin{equation*}
    \mathcal{H}_{pre} = \left\{ s\in\Gamma(\mathbb{L})\;:\;||s||<\infty\right\},
\end{equation*}
where
\begin{equation*}
||s||^2 = \int_M(s,s)\,\epsilon_\omega,
\end{equation*}
is given by the integral of the Hermitian inner product in each fibre against a conventionally scaled volume form $\epsilon_\omega \propto \omega^n$. 
Classical observables (smooth functions on $M$) can be lifted to operators on $\mathcal{H}_{\text{pre}}$ using the \textit{prequantum operator} prescription
\begin{definition}\label{def:prequanOperator}
    Let $(M,\omega)$ be prequantisable. Then the ``hat" map \begin{equation}
        \widehat{}\,\,\,:C^\infty(M)\rightarrow \mathcal{O}(\mathcal{H}_{\text{pre}}),
    \end{equation}
    given by
    \begin{equation*}
      f\longmapsto\widehat{f} \;= -i\hbar \nabla_{X_f}+f 
    \end{equation*}
    produces operators on $\mathcal{H}_{\text{pre}}$ satisfying the Dirac quantisation axioms, in particular, it maps Poisson brackets to commutators 
    \begin{equation*}
        [\widehat{f}_1,\widehat{f}_2] = -i \hbar \widehat{\left\{ f_1,f_2\right\}}.
    \end{equation*}
    Self-adjointness of $\widehat{f}$ follows from the completeness of $X_f$.
\end{definition}

To write $\widehat{f}$ in local coordinates, choose a local trivialisation of $\mathbb{L}$ by choosing a connection form $\theta$. Since $\text{curv}(\nabla)=\omega$, the connection form must be a symplectic potential, satisfying $d\theta=\omega$. Locally we have
\begin{equation}
    \widehat{f} = -i\hbar\left(X_f-\frac{i}{\hbar}\theta(X_f)\right)+f.
\end{equation}

From a physical perspective $\mathcal{H}_\text{pre}$ is not a suitable Hilbert space for a quantum-mechanical system, as its elements generally depend on position \textit{and} momentum, in violation of the Uncertainty Principle. An extra piece of geometric data called a \textit{polarisation} is introduced to ensure the dependence of the elements of a Hilbert space formed from $\Gamma(\mathbb{L})$ on local coordinates is physically admissible. From here it becomes desirable to work with the complexification of certain structures, where this is the case we denote complex conjugation of an object $Z$ by $\overline{Z}$.

\begin{definition}
    A polarisation of a symplectic manifold $M$ is an integrable Lagrangian distribution $P \subset T^{\mathbb{C}}M$ which is at each point closed under the Jacobi-Lie bracket and has constant $\dim(P_m\cap\overline{P}_m)$.
\end{definition}

If $\overline{P}=P$ or $P\cap\overline{P}=\emptyset$ we call $P$ a \textit{real} or \textit{purely complex} polarisation respectively.
\begin{example}
    For any smooth manifold $N$, the symplectic manifold $M=T^*N$ has two natural real polarisations; The vertical polarisation defined by $P = \mathrm{span}(\partial_p)\subset T^\mathbb{C}M$ and the horizontal polarisation defined by $P = \mathrm{span}(\partial_q)\subset T^\mathbb{C}M$. 
\end{example}
The best-behaved polarisations are generally those generated from K\"ahler structures on $M$.
\begin{definition}
    Let $M$ be a symplectic manifold with a purely complex polarisation $P$. Let $J$ be the unique almost complex structure where $J|_P = i\;\mathbb{I}$ and $J|_{\overline{P}} = -i\;\mathbb{I}$. $P$ is called a K\"ahler polarisation if the Hermitian metric 
    \begin{equation*}
        g(X,Y) = \omega(X, JY)
    \end{equation*}
    is positive definite for all $X, Y\in T^\mathbb{C}M$.
\end{definition}
\begin{example}
    Consider $T^*\mathbb{R}^n$ with the canonical symplectic structure. Under the complex coordinates $z_i=q_i-ip_i$ the symplectic structure takes the form $\omega= -\frac{i}{2}dz_i\wedge d\bar{z}_i$. The span of $\partial_{z}$ at each point defines a K\"ahler polarisation.
\end{example}

The physical admissibility of sections is then defined with respect to dependence on a particular polarisation.
\begin{definition}
    Let $\mathbb{L}\rightarrow M$ be a prequantisation with connection $\nabla$ and $P$ a polarisation on $M$. A section $s\in\Gamma(\mathbb{L})$ is said to be polarised iff
    \begin{equation*}
        \nabla_Xs=0\quad\text{for all }X\in \overline{P}.
    \end{equation*}
\end{definition}
From these polarised sections we can build a new Hilbert space $\mathcal{H}_P$, called the \textit{polarised} or \textit{quantum} Hilbert space.
\begin{definition}
    The quantum Hilbert space associated to a prequantisation $\mathbb{L}\rightarrow M$ with connection $\nabla$ and polarisation $P$ is
\begin{equation*}
\mathcal{H}_{P}(M) = \left\{s\in\Gamma(\mathbb{L})\;|\; \nabla_{X}s=0\;\text{for all}\;\;X\in \overline{P}\;\text{and}\;||s||<\infty\right\}.
\end{equation*}
We will often drop the $M$ it is clear from context.
    \end{definition}

It is possible for $\mathcal{H}_P$ to be empty, especially if $P$ is a real polarisation of a noncompact manifold, for example, the horizontal or vertical polarisation of $T^*\mathbb{R}$. In this case an extra piece of data, a bundle of $\frac{1}{2}$-forms, is needed to define a sensible volume from on $D:=M/P$ from which a more useful notion of square-integrability can be constructed. Recall that the canonical bundle of a rank $k$ vector bundle $E$ is given by the rank 1 product bundle $\mathcal{K}(E)=\bigwedge^kE^*$, whose sections are top forms. Given a manifold $M$, we define the canonical bundle over $M$ as $\mathcal{K}(M) := \mathcal{K}(T^\mathbb{C}M)$.
\begin{definition}
    Let $E$ be a vector bundle. A square root of $E$ is a vector bundle, denoted $\sqrt{E}$, along with a fixed isomorphism $\sqrt{E}\otimes\sqrt{E}\cong E$. In other words, it is the formal square root (up to isomorphism) with respect to the tensor product. Sections of $\sqrt{E}$ are called $\frac{1}{2}$-forms. 
\end{definition}
In general, there is a topological obstruction encountered when defining a square-root bundle, see \cite{atiyah1971riemann} for a comprehensive treatment.

\begin{definition}\label{def:correctedHilb}
    Let $P$ be a polarisation of $M$ such that $M/P$ is a manifold. Let $\mathbb{L}\otimes \delta_P\rightarrow M$ be a prequantisation of $M$ with $\frac{1}{2}$-form correction, connection $\nabla$ and polarisation $P$. The corrected quantum Hilbert space is
    \begin{equation*}   
    \widetilde{\mathcal{H}}_{P}(M) = \left\{\widetilde{s}=s\otimes\nu\in\Gamma(\mathbb{L}\otimes\delta_P)\;|\; \nabla_{X}s=0\;\text{for all}\;\;X\in \overline{P}\;\text{and}\;||\;\widetilde{s}\;||<\infty\right\},
     \end{equation*}
     where
     \begin{equation*}
         ||\widetilde{s}||:=\int_{M/P}(s,s)\,(\nu,\nu)
     \end{equation*}
     is given by
     \begin{equation*}
         (\nu,\nu) = \sqrt{\frac{\overline{(\nu\otimes\nu)}\wedge(\nu\otimes\nu)}{\epsilon_\omega}}\,\,.
     \end{equation*}
\end{definition}

The prequantum operator extends to sections $\widetilde{s}\in\Gamma(\mathbb{L}\otimes\delta_P)$, we denote an extension of $\widehat{f}$ to operate on $\delta_P$ by $\widetilde{f}$. 

\begin{definition}
    Let $\mathbb{L}\otimes \delta_P\rightarrow M$ be a prequantisation of $M$  with $\frac{1}{2}$-form correction, with respect to a polarisation $P$. 
    The $\frac{1}{2}$-form corrected pre-quantum operator associated to $f\in C^\infty(M)$ acting on a section $\widetilde{s}=s\otimes\nu\in\Gamma\left(\mathbb{L}\otimes \delta_P\right)$ is given by
    \begin{equation*}
        \widetilde{f}(\;\widetilde{s}\;) = \widehat{f}\otimes\nu -i\hbar s\otimes \mathcal{L}_{X_f}(\nu).
    \end{equation*}
\end{definition}
The Lie derivative of a $\frac{1}{2}$-form may be computed using the Leibniz rule $\mathcal{L}_X(A\otimes B) = (\mathcal{L}_XA)\otimes B + A\otimes( \mathcal{L}_X B)$ and Cartan's homotopy formula.

Given a (possibly $\frac{1}{2}$-form corrected) quantisation of $M$, the polarisation chosen limits the space of quantisable observables. Admissible observables should map polarised sections to polarised sections, i.e. $\mathcal{H}_P$ should be closed under the action of quantum operators. It turn out that $\widehat{f}$ preserves polarised sections iff $[X_f,Y]\in P$ for all $Y\in P$. In the case of $\mathcal{H}_{\partial_p}(T^*\mathbb{R})$ this means the only quantisable functions are those of the form $a(q)p+b(q)$, such observables are called \textit{directly quantisable}. It is possible to quantise observables with up to quadratic dependence on canonical momentum, such as kinetic energy in the momentum polarisation, by use of the \textit{BKS pairing}, a special map between quantum Hilbert spaces due to Blattner, Kostant and Sternberg \cite{blattner2006metalinear}. In general, two geometric quantisations of the same space but with respect to different polarisation will give rise to different Hilbert spaces which may not be unitarily related. This will not be a problem for us, as all our Hilbert spaces are related through the $B_N$.

\section{K\"ahler Quantisation of the 2DSHO and the Jordan-Schwinger Map}\label{sec:js}

In this, section we see how geometric quantisation and symplectic reduction can be combined in a simple way to quantise the 2DSHO. This section serves as a warm-up for section \ref{sec:redQuant} and an opportunity to see the machinery introduced in section \ref{sec:background} working in a familiar setting. First, we demonstrate the explicit geometric quantisation of flat phase space with respect to \eqref{eqn:2dsho}, constructing a $\frac{1}{2}$-form corrected Hilbert space, then we use symplectic reduction to show how the same can be achieved by ``factoring" a reduced phase space.
The interplay of quantisation and reduction itself, originally studied in \cite{guillemin1982geometric}, has a rich history.

\subsection{Direct Quantisation}
Recall that the Jordan-Schwinger map \cite{jordan1935zusammenhang,schwinger1952angular} takes the Hilbert space of a quantum mechanical system into the $N$-dimensional Segal-Bargmann space $B_N$ of the product of $N$ quantum harmonic oscillators. Following Hall \cite{hall2013quantum} we define $B_N$ as the space of holomorphic functions $F$ in $N$ complex variables $\mathbf{z}$ which are finite with respect to the norm
\begin{equation}\label{eqn:bargnorm}
    ||F||^2_{B_N} = \frac{1}{(\pi\hbar)^N}\int|F(\mathbf{z})|^2 e^{-\frac{1}{\hbar}|\mathbf{z}|^2}d\mathbf{z}.
\end{equation}

For simplicity we consider the explicit quantisation of the 2D harmonic oscillator on flat phase space $M=T^*\mathbb{R}^2$, though the argument may be straightforwardly extended to harmonic oscillators in any dimension. Take the canonical K\"ahler structure on $T^*\mathbb{R}^2$ with complex coordinates $z_j=q_j-ip_j$. Let $\omega = i\partial\bar{\partial}\mathcal{K}$ for $\mathcal{K}=\frac{1}{2}|\mathbf{z}|^2$ a K\"ahler potential.
A particularly nice local trivialisation is associated to the connection form
\begin{equation*}
	\theta = \frac{i}{2}\left(\bar{\partial}\mathcal{K}-\partial\mathcal{K}\right).
\end{equation*}
We will use the holomorphic polarisation, which is a K\"ahler polarisation $P=\text{span}(\partial_{\mathbf{z}})$. Admissible sections $\psi\in\mathcal{H}_{\partial_\mathbf{z}}$ are those satisfying the polarisation condition
\begin{equation*}
    \nabla_{\partial_{\bar{z}_i}}\psi = 0\quad\text{for}\;\; i=1,2.
\end{equation*}
Admissible sections thus take the form
\begin{equation}\label{eqn:kahlertemp}
    \psi(\mathbf{z},\bar{\mathbf{z}}) = F(\mathbf{z})e^{-\frac{1}{4\hbar}|\mathbf{z}|^2},
\end{equation}
where $F(\mathbf{z})$ is a holomorphic function of $z_1$ and $z_2$. Taking the Hermitian inner product to be the standard inner product on each $\mathbb{C}$ fibre, we have
\begin{equation*}
    ||\psi||^2 = \int |F(\mathbf{z})|^2 e^{-\frac{1}{2\hbar}|\mathbf{z}|^2} \epsilon_\omega,
\end{equation*}
where
\begin{equation*}
    \epsilon_\omega = \frac{1}{\left(2\pi\hbar\right)^2}\frac{\omega^2}{2!}
\end{equation*}
is our conventional volume form on $M$. $\mathcal{H}_{\partial_\mathbf{z}}$ is given by sections of the form \eqref{eqn:kahlertemp} subject to the above norm. To apply the $\frac{1}{2}$-form correction to $\mathcal{H}_{\partial_\mathbf{z}}$, observe that the canonical bundle associated with the polarisation is trivial and the square root $\delta_{\partial_\mathbf{z}}$ has unit section $\sqrt{dz_x\wedge dz_y} = \sqrt{d\mathbf{z}}$. The corrected space $\widetilde{\mathcal{H}}_{\partial_\mathbf{z}}$ follows from definition \ref{def:correctedHilb}.

Let $A$ be a classical observable such that $X_A$ preserves the polarisation $\partial_{\mathbf{z}}$. For completeness, we give the general form of $\widetilde{A}$, the quantum operators with half-form correction
\begin{equation}
\begin{split}
	\tilde{A}\tilde{\psi} = \Bigg(-\frac{i\hbar}{\omega}\left[\frac{\partial A}{\partial z_i} \frac{\partial \psi}{\partial \bar{z}_i}- \frac{\partial A}{\partial \bar{z}_i} \frac{\partial \psi}{\partial z_i}\right]+\left[-\frac{1}{2\omega}\left(\frac{\partial A}{\partial z_i} \frac{\partial \mathcal{K}}{\partial \bar{z}_i}- \frac{\partial A}{\partial \bar{z}_i} \frac{\partial \mathcal{K}}{\partial z_i} \right)+A\right]\psi\Bigg)\otimes\sqrt{d\mathbf{z}}\quad\cdots\\
    -\frac{i\hbar}{\omega}\psi\otimes L_{X_A}\sqrt{d\mathbf{z}},
    \end{split}
\end{equation}
where we are implicitly summing over $i=x,y$.
In the case of $\widetilde{H}$ acting on $\widetilde{\mathcal{H}}_{\partial_\mathbf{z}}$ we have 
\begin{equation*}
    \widetilde{H}\widetilde{\psi} = \hbar\left(z_y \frac{\partial F}{\partial z_y}+z_x\frac{\partial F}{\partial z_x}+F\right)e^{-\frac{1}{4}|\mathbf{z}|^2}\otimes\sqrt{d\mathbf{z}},
\end{equation*}
which admits a nice basis of eigenfunctions $\psi_{n_x,n_y}$ in terms of polynomials of the $z_i$,
\begin{align*}
    \widetilde{H}\widetilde{\psi}_{n_x,n_y} &=\widetilde{H}\left(z_x^{n_x}z_y^{n_y}e^{-\frac{1}{4}|\mathbf{z}|^2}\otimes\sqrt{d\mathbf{z}}\right)\\    
    &=\hbar\left(n_x+n_y+1\right)\widetilde{\psi}_{n_x,n_y}.
\end{align*}
The rest of the quantisable complex functions give $\widetilde{z_i} = z_i$ and $\widetilde{\bar{z}_i} = \partial_{z_i}$; our creation and annihilation operators respectively, for each oscillator degree of freedom. The simplest way to understand this Hilbert space is by mapping it into the Segal-Bargmann space via a unitary (up to overall constant determined by $vol(M)$) map $\varphi:\widetilde{\mathcal{H}}_{\partial_{\bar{z}}} \rightarrow B_2$, where a unitary basis with respect to \eqref{eqn:bargnorm} is given by
\begin{equation}\label{eqn:sBasis}
\varphi\left( \frac{\widetilde{\psi}_{n_x,n_y}}{\sqrt{n_x! \;n_y!}}\right) = |n_x,n_y\rangle = \frac{\left(a_y^\dag\right)^{n_y}}{\sqrt{n_y!}}\frac{\left(a_x^\dag\right)^{n_x}}{\sqrt{n_x!}}|0\rangle.
\end{equation}
for $a_i^\dag = \varphi \widetilde{\bar{z}}_i\varphi^{-1}$ and $a_i = \varphi \widetilde{\bar{z}}_i\varphi^{-1}$ the creation and annihilation operators and $|0\rangle$ the ground state in $B_2$. For clarity, we write states in $B_N$ using \textit{bra-ket} notation, where kets are identified with unitary basis elements. In this basis the $a_i$s and $a^\dag_i$s act on each component individually like
\begin{align}
    a^\dag_x|n_x,n_y\rangle =&\; \sqrt{n_x+1}|n_x+1,n_y\rangle,\\
    a_x|n_x,n_y\rangle =&\; \sqrt{n_x}|n_x-1,n_y\rangle,
\end{align}
and similarly on the $y$ component. This is the \textit{Schwinger representation} of the 2D harmonic oscillator. 

\subsection{Quantisation via Reduction}\label{sec:qviar}
Now we can perform the same calculation via a different route. Given that we have the splitting $T^*\mathbb{R}^2\cong T^*\mathbb{R}\times T^*\mathbb{R}$, we can construct $\widetilde{\mathcal{H}}_{\partial_{\mathbf{z}}}$ by using the associated splitting of the quantisation.
We seek a symplectic reduction $T^*\mathbb{R}^2\rightarrow T^*\mathbb{R}$ that manifests an example of the famous theorem by Guillemin and Sternberg (see \cite{guillemin1982geometric} for the original theorem and \cite{guillemin2002moment} for an updated discussion), which states, roughly speaking, that symplectic reduction and geometric quantisation are commuting operations.

Consider the Hamiltonian $\mathbb{R}$-action of $r\in\mathbb{R}$ on $(q_x,p_x,q_y,p_y)\in T^*\mathbb{R}^2$
\begin{equation*}
\Phi_r(q_x,p_x,q_y,p_y)\mapsto (q_x,p_x,q_y+r,p_y).
\end{equation*}
Taking $r=e^{t\xi}$ for $t,\xi\in\mathbb{R}$, such an action is generated by a vector field $X_\xi$ where $\xi$ is an element of the Lie algebra of $\mathbb{R}$. The generator is
\begin{equation*}
	X_\xi = \frac{d}{dt}\Big|_{t=0}\Phi_{e^{t\xi}}
\end{equation*}
We have that $X_\xi(q_y) = \xi$ and so $X_\xi = \xi\partial_{q_y}$. $X_{\xi}$ is associated with the Hamiltonian function
\begin{eqnarray}
	H_\xi:T^*\mathbb{R}^2\rightarrow \mathbb{R}:(q_x,p_x,q_y,p_y)\mapsto\xi \,p_y.
\end{eqnarray}
Given this the moment map associated to the action is the $G$-equivariant map $\mu:T^*\mathbb{R}^2\rightarrow \mathbb{R}$ satisfying
\begin{equation*}
	H_\xi(q_x,p_x,q_y,p_y) = \langle \mu(q_x,p_x,q_y,p_y),\xi\rangle,
\end{equation*}
which is clearly given by the projection $\mu(q_x,p_x,q_y,p_y) = p_y$, where the momenta now take values in the dual of the Lie algebra. The equivariance of $\mu$ follows from the triviality of the $\mathrm{Ad}^*$ action. The reduced phase space associated to a point $p_y$ is $H_\xi^{-1}(p_y)/\Phi\cong T^*\mathbb{R}$ with the standard symplectic structure. Let $\mathcal{H}_{\partial_{z}}(H_\xi^{-1}(p_y)/\Phi)$ denote the Hilbert space  obtained from the K\"ahler quantisation of $H_\xi^{-1}(p_y)/\Phi$. $\mathcal{H}_{\partial_{z}}(H_\xi^{-1}(p_y)/\Phi)$ is unitarily related to $B_1$ by the the arguments of the previous section.

We can now illustrate a form of ``quantisation commutes with reduction", in that there is a natural bijection between $G$-invariant sections of $\mathbb{L}\rightarrow T^*\mathbb{R}^2$ and sections of a new bundle $\mathbb{L}^\prime\rightarrow H_\xi^{-1}(p_y)/\Phi$. The $G$-invariant sections are constructed by lifting a $G$-action to $\mathbb{L}$ and averaging over its orbits. The lift of $\Phi$ to $\mathbb{L}$ is via the prequantisation $\xi\mapsto-i\hbar \nabla_{X_{\xi}}+H_\xi$. $X_\xi$ preserves the polarisation and may be integrated to a unitary $G$-action $\Phi^{\mathbb{L}}$ on $\mathcal{H}_{\partial_{\mathbf{z}}}$, giving
\begin{equation}\label{eqn:unitaryAction}
	\Phi^\mathbb{L}_\xi \psi(\mathbf{z}) = \psi\left(\Phi_\xi(\mathbf{z})\right)  \exp\left\{-\frac{i}{\hbar}\int_0^\xi( \theta(X_\xi)-H_\xi)d\xi^\prime\right\},
\end{equation}
though for $G$-invariance the local operator form is sufficient. A section $\psi\in\Gamma(\mathbb{L})$ is $G$-invariant if
\begin{equation*}
    \frac{d}{dt}\Bigg|_{t=0}\Phi^{\mathbb{L}}_{e^{t\xi}}(s(z))=0\,\,\,\,\text{for all }z\in T^*\mathbb{R}^2,
\end{equation*}
or, more simply, if $\nabla_{X_\xi}s=0$. This effectively imposes a lower-dimensional polarisation condition.
Recall the standard K\"ahler quantisation of $T^*\mathbb{R}^2$, we compute $X_\xi = -\xi(\partial_{z_y}+\partial_{\bar{z}_y})$ and so $G$-invariance implies
\begin{align*}
	\nabla_{X_\xi}\psi(\mathbf{z})&=\xi\left[-i\hbar\left(\nabla_{\partial_{z_y}}+\nabla_{\partial_{\bar{z}_y}}\right)+\frac{i}{2}(z_y-\bar{z}_y)\right]\psi(\mathbf{z})\\
	&=\xi\left[-i\hbar\frac{\partial F}{\partial z_y}+\frac{i}{2}z_yF\right]e^{-\frac{1}{4\hbar}(|\mathbf{z}_x|^2+|\mathbf{z}_y|^2)}\\
    &=0.
\end{align*}
Giving the condition
\begin{equation*}
\frac{\partial F}{\partial z_y}= \frac{1}{2\hbar}z_yF.
\end{equation*}
The set of $G$-invariant elements $\psi^\xi\in\mathcal{H}_{\partial_{\mathbf{z}}}$, denoted $\mathcal{H}^\xi_{\partial_{\mathbf{z}}}$, which we assume to have inherited a Hilbert space structure, take the form
\begin{equation*}
\psi^\xi(\mathbf{z}) = F(z_x,0)\exp\left\{-\frac{1}{4\hbar}\left(|z_x|^2+2iq_yp_y\right)\right\}.
\end{equation*}
The $\psi^\xi$, being fixed points of $\Phi^\mathbb{L}_\xi$, represent nontrivial equivalence classes $[\psi]_\xi$ of elements of $\mathcal{H}_{\partial_{\mathbf{z}}}$. We can choose as a representative of each class the section with vanishing dependence on $q_y$, i.e. we take
\begin{align*}
    [\psi^\xi(q_x,p_x,q_y,p_y)]_\xi &= \psi(q_x,p_x,[q_y]_\xi,p_y)\\
	&=\psi(q_x,p_x,[0]_\xi,p_y)\\
	&=\left[F(z_x,0)e^{-\frac{1}{4\hbar}|z_x|^2}\right]_\xi.
\end{align*}

There is a canonical bijective map between the space of $G$-invariant sections and sections of the reduced quantisation given by
\begin{equation}\label{eqn:basicReduction}
    \widehat{\pi}^\prime: \mathcal{H}^\xi_{\partial_{\mathbf{z}}}\longrightarrow \mathcal{H}_{\partial_{z}}(H_\xi^{-1}(p_y)/\Phi): \left[\psi^\xi(z_x,0)\right]_\xi\longmapsto \psi(z_x)
\end{equation}
with the inverse being the inclusion of the $\psi(z_x)$ into $\mathcal{H}^\xi_{\partial_{\mathbf{z}}}$.
Following \cite{hall2007unitarity} we see that to make such a bijection into a unitary map between the Hilbert spaces we need to weight it by the ``volume" of each $G$-orbit. Since the orbits in this case are straight lines, all have the same volume and the bijection is exactly a multiple of a unitary map $\widehat{\pi}^\prime = \lambda\widehat{\pi}$ where
\begin{eqnarray}
	\lambda = \int e^{-\frac{1}{4\hbar}|z|^2}dz\wedge d\bar{z}=\frac{4\pi}{\hbar}.
\end{eqnarray}
If all the orbits were not of the same volume, the construction of a unitary map would need to be sensitive to the exact value of the invariant momenta, which is problematic. Since our construction does not depend on the exact value of $p_y$ used to generate the reduced phase space, we can say that we have proved \textit{quantisation commutes with factorising harmonic oscillators} as a simple example of quantisation commuting with reduction. In each reduced Hilbert space, we may construct the unitary map into $B_1$, in particular, we could build $B_2$ from the tensor product of two reduced Hilbert spaces obtained via the reduction of $T^*\mathbb{R}^2$ in the $x$ and $y$ components.

\section{Reduced Quantisations of the 2DSHO}\label{sec:redQuant}
We are interested in solutions to \eqref{eqn:pwe} with an overall Gaussian profile, modulated by some special functions. Gaussianity is controlled by the presence of \eqref{eqn:2dsho} to select eigenfunction \cite{dennis2017swings}, but one eigenvalue is not sufficient to uniquely determine the eigenfunctions of a 2D system. The second operator generating the other eigenvalue usually corresponds to the separation constant of the coordinate system used. In this section, we use the polarisation of a quantisation as a way of carrying information about this observable, even when it does not correspond to a separation constant.

\subsection{The Reduction $T^*\mathbb{R}^2\rightarrow S^2$}
Recall that 2DSHO is defined by the Hamiltonian 
\begin{equation}\label{eqn:2DSHOham}
    H=\frac{1}{2}\left(p_x^2+p_y^2+q_x^2+q_y^2\right),
\end{equation}
on $T^*\mathbb{R}^2$ with the standard symplectic structure. 
Let $\Phi_t$ be the Hamiltonian $S^1$ action generated by $X_H$.
Alongside $H$ there are three other important invariants, the $\mathfrak{su}(2)$ algebra of constants of the motion generated by
\begin{align}\label{eqn:structreConstants}
	\tau_1& =q_xq_y+p_xp_y,\\
	\tau_2& =\frac{1}{2}(p_x^2+q_x^2-p_y^2-q_y^2) ,\\
	\tau_3& = xp_y-yp_x,
\end{align}
satisfying
\begin{equation}\label{eqn:structreConstants}
	\left\{\tau_i,\tau_j \right\} = \epsilon_{ijk}2\tau_k.
\end{equation}
Taken together, these invariants form the components of the Fradkin Tensor, the oscillator analogue of the Laplace–Runge–Lenz vector \cite{higgs1979dynamical}. We will now construct the symplectic reduction of $T^*\mathbb{R}^2$ using $X_H$ as the generator of the symmetry, this construction is standard in the literature (See 4.3.4 of \cite{abraham1978foundations}, \cite{cushman1982reduction} or section 4 of \cite{holm2011geometric} for different flavours), we provide it here for completeness. Let $t\in S^1$ be the exponentiation $t=e^{\lambda\xi}$ for $\lambda\in\mathbb{R}$ and $\xi\in \mathrm{Lie}(S^1)$. Explicitly, the generator is
\begin{equation}\label{eqn:generator}
\begin{split}
    \frac{d}{d\lambda}\Big|_{\lambda=0}\Phi_{e^{\lambda\xi}}(\mathbf{q},\mathbf{p}) &= \frac{d}{d\lambda}\Big|_{\lambda=0}(\mathbf{q}\cos(\lambda\xi)-\mathbf{p}\sin(\lambda\xi), \mathbf{p}\cos(\lambda\xi)+\mathbf{q}\sin(\lambda\xi))\\
    &=\xi\left(\mathbf{p}\partial_{\mathbf{q}}-\mathbf{q}\partial_{\mathbf{p}}\right)\\
    &=X_{\xi H}
    \end{split}
\end{equation}
This is by construction a Hamiltonian vector field with Hamiltonian $\xi H$. A moment map $\mu:T^*\mathbb{R}^2\rightarrow \mathbb{R}$ for $\Phi$ is given by
\begin{equation*}
    \xi H(\mathbf{q},\mathbf{p}) = \langle\mu(\mathbf{q},\mathbf{p}),\xi\rangle.
\end{equation*}
Take $\xi$ to be the unit element under the inner product, then $\mu(\mathbf{q},\mathbf{p}) = H(\mathbf{q},\mathbf{p})\,\xi^*$ -- the moment map scales $\xi^*$ by the Hamiltonian. Since $\left(\Phi_t\right)^*H(\mathbf{q},\mathbf{p})$ and $\mathrm{Ad}^*_g H$ are trivial actions, $\mu$ is $G$-equivariant. The reduced phase space is given by the quotient of $H^{-1}(E)\cong S^3$ by $S^1$, which is the well-known Hopf construction with base space $S^2$. It is instructive, however, to verify explicitly that $\mu$ satisfies the criteria for reduction.

Given our moment $\mu$, we need to confirm that $S^1$ acts freely and properly on $\mu^{-1}(E)$. $S^1$ acts freely as the only fixed point under $H$ is the zero orbit, which is accessible only when $E=0$, and $S^1$ has no proper subgroups. Furthermore, $G$ acts properly as $G$ itself is compact.
Finally, $\mu^{-1}(E)$ is regular for $E> 0$ as $E$ is a constant across the orbit. Having satisfied all the technical requirements we can now proceed with the reduction for the regular orbits.

Consider the inclusion $S^3_E\hookrightarrow T^*\mathbb{R}^2$ defined by \eqref{eqn:2DSHOham} labelled by $E$. For each $v\in T_p\mu^{-1}(E)$ we need to find $\xi\in\mathbb{R}$ such that $d\mu(p)(v-X_\xi)=0$. Since $\mathbb{R}$ is abelian however, $[\xi_i,\xi_j]$ vanishes and we are left with $\omega_{[p]}([v_1],[v_2]) = \omega(v_1,v_2)$
 restricted to $v_i\in T_p\mu^{-1}(E)$.
 The unique symplectic form on the reduced phase space is given by the consistency condition $\pi_E^*\,\omega_E = \iota^*\omega$ for the canonical projections and inclusions
 \begin{align}
     \pi_E&:\mu^{-1}(E)\rightarrow \mu^{-1}(E)/S^1,\\
     \iota_E&: \mu^{-1}(E)/S^1 \hookrightarrow \mu^{-1}(E).
 \end{align}
 Given that the $\tau_i$ project unaffected, being constants of the motion, we can use them to express $\omega_E$. On $T^*\mathbb{R}^2$ we have the canonically conjugate coordinates $\tau_3$ and $\phi=\frac{1}{2}\arctan(\tau_1/\tau_2)$, since these coordinates descend to the reduced space, the pullback of the reduced area form must be proportional to $d \tau_3\wedge d\phi$. The $\tau_i$ taken with $\mu$ form local coordinates on $T^*\mathbb{R}^2$ in which $\omega$ may be expressed, the consistency condition then admits the exact identification
 \begin{equation}
     \omega_E = d\tau_3\wedge d\phi.
 \end{equation}
 Here, $\iota^*\omega$ is recovered from the relationship $E^2 = \tau_1^2+\tau_2^2+\tau_3^2$. This reduced space we denote by $R_E=(S^2,\omega_E)$ and refer to it as the reduced phase space of \eqref{eqn:2DSHOham}, we have thus completed the symplectic reduction
\begin{equation}\label{eqn:reduction}
(T^*\mathbb{R}^2,\omega)\longrightarrow (R_E,\omega_E). 
\end{equation}

\subsection{Quantisation From Action-Angle Foliation}
With $\omega_E$ now in hand, we can perform geometric quantisation. On $S^2$ we can choose canonical cylindrical coordinates about any axis defined by a normalised $T=(\mathbf{u}\cdot \boldsymbol \tau)/E $ for some unit vector $\mathbf{u}$ and its associated azimuth $\chi$ to express $\omega_E$. Each canonical angle provides a different natural polarisation. Expressing $\omega_E=E\,dT\wedge d\chi$ we can consider the quantisation with respect to every choice of direction $\mathbf{u}$, where the polarisation in each case is defined by the (almost) Lagrangian foliation of $R_E$ by level sets of $T$. The leaf space for such a polarisation is the quotient $S^2/S^1$. We can safely ignore the singularities in the quotient and treat the leaf space as an open interval, as such elliptic singularities do not contribute to the quantisation \cite{hamilton2010locally}. Given this, the canonical bundle over the leaf space $\mathcal{K}_T$ (with singularities removed) is trivial and its square root $\delta_T$ we take to have unit section $\sqrt{dT}$.

The integrality condition for a prequantum line bundle over $R_E$ is modified by the addition of the $\frac{1}{2}$-form bundle \cite{woodhouse1992geometric}. We forgo a rigorous derivation of the effect and instead apply the EBK quantisation rule \cite{brack2018semiclassical}, which in this case performs an equivalent modification of the Bohr-Sommerfeld condition by the addition of the Maslov index, qualitatively related to the number of classical turning points encountered in a trajectory \cite{mcduff2017introduction}. The corrected condition is 
\begin{equation}\label{eqn:BSsphere}
	\frac{1}{2\pi \hbar}\int_{R_E}\omega_E   \in \mathbb{Z}\quad\! \Rightarrow \quad\! E = \frac{(n+1)}{2}\hbar\quad\text{for}\quad n=0,1,2,3,\ldots
\end{equation}
giving a collection of prequantum line bundles $\mathbb{L}_{n}\rightarrow R_E$ where $c_1(\mathbb{L}_n)=n$.

Working in the adapted trivialisation $\theta =-\chi \,dT$, according to the polarisation condition
\begin{equation*}
	\nabla_{\partial_\chi}\psi =\frac{\partial \psi}{\partial \chi}=0,
\end{equation*}
$\widetilde{\mathcal{H}}_{\partial_{\chi}}$ is comprised of all square-integrable wavefunctions $\widetilde{\psi}$ of the form
\begin{equation*}
    \widetilde{\psi} = \psi(T)\otimes\sqrt{dT},
\end{equation*}
with respect to the inner product
\begin{eqnarray}
    (\widetilde{\psi}_1,\widetilde{\psi}_2)= \int_{-E}^E\overline{\psi_1}\psi_2\,dT.
\end{eqnarray}
The quantisation $T\mapsto\widetilde{T}$ of the observable $T$ gives the expected constant operator
\begin{equation*}
    \tilde{T}\tilde{\psi} = T\tilde{\psi}.
\end{equation*}
$\widetilde{T}$ can provide an eigenbasis of sections for $\widetilde{\mathcal{H}}_{\partial_\chi}$.
Such sections will only be supported on a finite number of leaves called the \textit{Bohr-Sommerfeld leaves}, which we find by passing all the leaves $T^{-1}(t)$ through the Bohr-Sommerfeld condition (modified to include the square root bundle \cite{woodhouse1992geometric}). We wish to compute
\begin{equation}\label{eqn:BSsphereT}
	\int_{T^{-1}(t)} \theta\in\mathbb{Z}\quad\! \Rightarrow \quad\! T = \frac{m\hbar}{2}\quad\text{for}\quad m\in\mathbb{Z}
\end{equation}
The condition (\ref{eqn:BSsphereT}) must be consistent with the physically accessible values of $T\in(-E,E)$, where again we have excluded the singular leaves $T=\pm E$ which do not contribute to the quantisation.  In this polarisation $R_E$ will always support $2N+1$ Bohr-Sommerfeld leaves, where the identification $c_1(\mathbb{L})=N$ follows from the curvature condition.

We have thus constructed a quantum Hilbert space, based on the canonical polarisation by action-angle variables on $R_E$ in the coordinate system $(T,\varphi)$. It has a $2N+1$ dimensional basis of eigenfunctions, agreeing with the dimension of the spin $N/2$ irreducible representations of $SU(2)$. In the next section, we will perform a similar calculation in local coordinates more amenable to explicit representation of the quantum operators. Note that the above calculation is independent of any particular choice of $T$, all give rise to a Hilbert space of the same dimension. Later we will see these spaces are unitarily related, a fact which, it must be emphasised, is not automatically true in geometric quantisation.

There are of course other canonical coordinates on $S^2$ whose momenta may be quantised \cite{tolstikhin1995hyperspherical}, but they give rise to more pathological singular foliations that require more general sheaf-theoretic methods to handle \cite{hamilton2010geometric, mir2021geometric}.

\subsection{K\"ahler Quantisation}\label{sec:kahler}
We will now look at the same quantisation using K\"ahler polarisations. The resulting Hilbert space $\widetilde{\mathcal{H}}_{\partial_z}$ provides the basis from which we build our geometric description of GG modes. Using the complex coordinates
\begin{equation}
        \tau_1 = E^2\frac{z+\bar{z}}{E^2+|z|^2}\,,\quad\tau_2=E^2\frac{i(z-\bar{z})}{E^2+|z|^2}\,,\quad\tau_3 = E\frac{|z|^2-E^2}{|z|^2+E^2}
\end{equation}
obtained via stereographic projection of $R_E$, we then rescale the canonical angle as $\varphi=2\chi=\arctan(\tau_1/\tau_2)$ to simplify the coming algebra. The canonical Poisson bracket gives $\{\tau_i,\tau_j\}=\epsilon_{ijk}\tau_k$ with respect to the symplectic structure
\begin{equation}
    \omega_E = \frac{2i E}{\left(1+|z|^2\right)^2}dz\wedge d\bar{z},
\end{equation}
computed via pullback. The relevant K\"ahler potential is
\begin{equation}
	\mathcal{K} = 2E \log(1+|z|^2).
\end{equation}

Just as before, holomorphically polarised sections take the form $\psi(z) = F(z)\exp(-\frac{1}{2\hbar}\mathcal{K})$ for $F$ some holomorphic function of $z$. The fibres of $\mathbb{L}$ also come with the norm
\begin{equation}\label{eqn:reducedKaherNorm}
	||\psi||^2 = E\int|F(z)|^2e^{-\frac{\mathcal{K}}{\hbar}}\frac{2iE^3}{(E^2+|z|^2)^2}dz\wedge d\bar{z}.
\end{equation}

There is a unique square root of the canonical bundle on $S^2$ (See 3.2 and $\mathrm{3}^\prime$ of \cite{atiyah1971riemann}). Unit section $\sqrt{dz}$ serves as a basis for $\delta_{\partial_z}$. A basis for $\widetilde{\mathcal{H}}_{\partial_z}$ can be found using eigenfunctions of $\widetilde{T}$. A good choice of basis for $\widetilde{\mathcal{H}}_{\partial_z}$ is
\begin{equation}\label{eqn:redBargmann}
    \widetilde{\psi}_{j,m}=z^{m+j}e^{-\frac{1}{2\hbar}\mathcal{K}}\otimes\sqrt{dz}\quad\text{for}\quad-j\leq m\leq j,
\end{equation}
as then the operators
\begin{equation*}
    \widetilde{T}\widetilde{\psi}_{j,m} = \hbar m \widetilde{\psi}_{j,m},
    \end{equation*}
    \begin{equation*}
\left(\widetilde{T}^2+\widetilde{x}^2+\widetilde{y}^2\right)\widetilde{\psi}_{j,m} = \hbar^2 j(j+1)\widetilde{\psi}_{j,m},
\end{equation*}
manifest angular momentum eigenvalues. Let $x$, $y$ be a choice of Cartesian coordinates in the $T=0$ plane, the raising and lowering operators generating a unitary basis can then be written

\begin{align}
    a^\dag\widetilde{\psi}_{j,m} &:= \left(\frac{1+2j}{j-m}\sqrt{\frac{1+j+m}{1+j-m}}\right)\,(\widetilde{x}-i\widetilde{y})\widetilde{\psi}_{j,m} = \widetilde{\psi}_{j,m+1}, \\
    a\widetilde{\psi}_{j,m} &:= \left(\frac{1}{(1+2j)\sqrt{(2+j-m)(j+m)}}\right)(\widetilde{x}+i\widetilde{y})\widetilde{\psi}_{j,m} = \widetilde{\psi}_{j,m-1}
\end{align}
It is clear to see, now that we have appropriately abstracted the algebra, that $\widetilde{\mathcal{H}}_{\partial_z}$ may be unitarily identified with a subspace of $B_1$.

\section{Applications}\label{sec:moderotate}
\subsection{Quantisation of Rotations}
Given a quantisable observable, we can write an expression for the unitary flow it generates using \eqref{eqn:unitaryAction}. For the case of flows generated by the $\tau_i$, these are orthogonal rotations of $R_E$ and lift to $SU(2)$ rotations upon quantisation. While such flows preserve the canonical action-angle and K\"ahler polarisations, they are not all diagonal in the basis of eigenfunctions picked out by $\widetilde{T}$. The splitting of some initial state under such a rotation corresponds to an inter-basis expansion in terms of a new basis associated to the quantisation of the observable generating the rotation. 

The easiest way to perform such rotations in $\widetilde{\mathcal{H}}_{\partial_\chi}(R_E)$ is by considering projections of the same flow on $\widetilde{\mathcal{H}}_{\partial_{\mathbf{z}}}(T^*\mathbb{R}^2)$. We need a map $\widetilde{\mathcal{H}}_{\partial_{\mathbf{z}}}(T^*\mathbb{R}^2)\rightarrow \widetilde{\mathcal{H}}_{\partial_{z}}(R_E)$ which commutes with the $S^1$ reduction \eqref{eqn:reduction}, analogous to \eqref{eqn:basicReduction}.
As the construction of $\widetilde{\mathcal{H}}_{\partial_\chi}(R_E)$ is independent of any particular choice of $T$, we can take $T= \tau_2$ without loss of generality.
In terms of the basis $|n_x,n_y\rangle\in \widetilde{\mathcal{H}}_{\partial_{\mathbf{z}}}(T^*\mathbb{R}^2)$ in \eqref{eqn:sBasis}, we form the operators
\begin{equation}
    J=\widetilde{\tau}_2\,,\quad	J_+=\widetilde{\tau}_1+i\widetilde{\tau}_3\quad 
	\text{and}\quad J_-=\widetilde{\tau}_1-i\widetilde{\tau}_3.
\end{equation}
Their action on $B_2$ takes the particularly simple form in the Schwinger representation
\begin{align}
    J|n_x,n_y\rangle =& \frac{1}{2}(n_x-n_y)|n_x,n_y\rangle\\
    J_+|n_x,n_y\rangle =& \sqrt{(n_y)(n_x+1)}|n_x+1,n_y-1\rangle\\
    J_-|n_x,n_y\rangle=& \sqrt{(n_x)(n_y+1)}|n_x-1,n_y+1\rangle
\end{align}
Defining the new variables $j = (n_x+n_y)/2$ and $m=(n_x-n_y)/2$ realises the standard angular momentum algebra.
\begin{align}
\widetilde{H}|j,m\rangle=&j(j+1)|j,m\rangle\\
    J|j,m\rangle =&m|j,m\rangle\\
    J_\pm|j,m\rangle =&\sqrt{(j\mp m)(j\pm m+1)}|j,m\pm1\rangle
\end{align}
thus \eqref{eqn:sBasis} rewritten in a the new unitary basis
\begin{equation}\label{eqn:relatedBasis}
|j,m\rangle=\frac{\left(a^\dag_x\right)^{j+m}\left(a^\dag_y\right)^{j-m}}{\sqrt{(j+m)!\,(j-m)!}}|0\rangle
=\frac{\left(z_x\right)^{j+m}\left(z_y\right)^{j-m}}{\sqrt{(j+m)!\,(j-m)!}}\end{equation}
$\widetilde{H}$ determines the values taken by $j$ in half-integer steps, with each subspace corresponding to a particular $j$ spanned by $-j\leq m\leq j$ in integer steps.

Fixing a value $E$ taken by $\mu$, the $\widetilde{\rho}\,^{\tau_2}$ invariant states, the eigenstates of $J$, span a $2j+1$ dimensional subspace of $\widetilde{\mathcal{H}}_{\partial_{\mathbf{z}}}(T^*\mathbb{R}^2)$. 
By the same argument as in \ref{sec:qviar}, this subspace may be unitarily identified with $\widetilde{\mathcal{H}}_{\partial_{z}}(R_E)$.

This relationship lets us now use the more ergonomic expressions for flows on $T^*\mathbb{R}^2$ instead of on $R_E$. Another point we wish to emphasise is that we may think about these unitary transformations of $\widetilde{\mathcal{H}}_{\partial_{\mathbf{z}}}(T^*\mathbb{R}^2)$, roughly speaking, as geometrically resulting from the rotation of the polarisations on the underlying reduced phase spaces. In particular, the new polarisation is described by level sets of an observable which is diagonal in the newly rotated basis. An explicit form for the inter-basis expansions is given by the standard representation theory of $SU(2)$.

\begin{theorem}
    Let $\widetilde{\mathcal{H}}_{\partial_z}(R_E)$ be the corrected quantum Hilbert space resulting from the K\"ahler quantisation of the reduced phase space of the 2D simple harmonic oscillator with respect to the canonical coordinates $d\tau_2\wedge d\varphi$, without loss of generality. The quantisation of an arbitrary rotation with respect to the Euler angles $2\boldsymbol{\vartheta}=(2\vartheta_1,2\vartheta_2,2\vartheta_3)$ is given by
    \begin{equation*}
        \widetilde{R}(\boldsymbol{\vartheta})|j,m\rangle=\sum_{m^\prime=-j}^j\mathscr{D}^j_{m^\prime,m}(2\boldsymbol{\vartheta})|j,m^\prime \rangle
    \end{equation*}
    where $\mathscr{D}$ is Wigner's $D$-matrix  \cite{sakurai2014modern}. Here ``rotation" refers to any action of a (combination of) $\widetilde{\rho}^{\tau_i}$.
\end{theorem}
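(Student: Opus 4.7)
The strategy is to identify the unitary flows generated by the quantised observables $\widetilde{\tau}_i$ as a genuine unitary representation of $SU(2)$, then invoke the standard representation theory to recognise it as the spin-$j$ irrep and read off the matrix elements from Wigner's formula. The work done in Section \ref{sec:moderotate} has already verified that $J=\widetilde{\tau}_2$ and $J_\pm = \widetilde{\tau}_1\pm i\widetilde{\tau}_3$ act on the basis $|j,m\rangle$ as the standard raising, lowering, and diagonal angular-momentum operators, and that the Casimir $\widetilde{T}^2+\widetilde{x}^2+\widetilde{y}^2$ takes the value $\hbar^2 j(j+1)$ on the $(2j+1)$-dimensional subspace labelled by a fixed value of $\widetilde{H}$. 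This already pins down the abstract representation.

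First, I would verify that the $\widetilde{\tau}_i$ generate a unitary $SU(2)$ action. Since the classical functions obey $\{\tau_i,\tau_j\}=\epsilon_{ijk}\,2\tau_k$, the Dirac axiom in Definition \ref{def:prequanOperator} gives $[\widetilde{\tau}_i,\widetilde{\tau}_j] = -2i\hbar\,\epsilon_{ijk}\widetilde{\tau}_k$ as operators on $\widetilde{\mathcal{H}}_{\partial_z}(R_E)$, which is precisely the $\mathfrak{su}(2)$ algebra up to the uniform rescaling by $2$ that is the source of the factor of $2$ appearing with the Euler angles in the statement. Because each $X_{\tau_i}$ is a complete vector field preserving the K\"ahler polarisation on $R_E$, the prequantum lift \eqref{eqn:unitaryAction} integrates to a globally defined one-parameter group of unitary operators on $\widetilde{\mathcal{H}}_{\partial_z}(R_E)$, and these three one-parameter groups close into an $SU(2)$ action by the previous commutator calculation.

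Next, I would identify the representation. Fixing $E$ restricts us to the eigenspace of $\widetilde{H}$ with eigenvalue $(n+1)\hbar/2$ where $n=2j$, which is $(2j+1)$-dimensional by the Bohr-Sommerfeld count \eqref{eqn:BSsphereT}. The Casimir identification above together with the explicit action of $J,J_\pm$ on $|j,m\rangle$ shows that this subspace is isomorphic, as an $\mathfrak{su}(2)$-module, to the standard spin-$j$ irreducible representation. Since this representation is unique up to isomorphism, and a unitary change of basis between two copies of the spin-$j$ irrep is itself determined by the $\mathfrak{su}(2)$ action, the unitary operator $\widetilde{R}(\boldsymbol{\vartheta})$ obtained by composing the three one-parameter subgroups in the order associated with the Euler angles $(2\vartheta_1,2\vartheta_2,2\vartheta_3)$ must have matrix elements in the $|j,m\rangle$ basis given by $\mathscr{D}^{j}_{m',m}(2\boldsymbol{\vartheta})$, which is the defining formula for Wigner's $D$-matrix.

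The main obstacle I expect is bookkeeping rather than substance: tracking the factor of $2$ between the Poisson-bracket normalisation $\{\tau_i,\tau_j\}=2\epsilon_{ijk}\tau_k$ and the conventional $\mathfrak{su}(2)$ structure constants, reconciling this with the author's doubling of the canonical angle $\varphi=2\chi$, and confirming that the Euler-angle convention used to define $\mathscr{D}$ in the cited reference matches the order in which the three exponentials $e^{-it\widetilde{\tau}_i}$ are composed. Once these normalisations are aligned, the result is an immediate consequence of the uniqueness of the spin-$j$ irrep and the standard definition of the Wigner $D$-matrix.
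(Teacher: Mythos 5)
Your proposal is correct in substance but proceeds along a genuinely different route from the paper. You argue abstractly: the quantised $\widetilde{\tau}_i$ close into $\mathfrak{su}(2)$ (up to the factor of $2$ coming from $\{\tau_i,\tau_j\}=2\epsilon_{ijk}\tau_k$), the flows integrate to a unitary $SU(2)$ action preserving the polarisation, the fixed-$E$ subspace carries the spin-$j$ irrep by the Casimir and dimension count, and since $J,J_\pm$ act on $|j,m\rangle$ with the standard Condon--Shortley matrix elements, the composed exponentials must have matrix elements $\mathscr{D}^j_{m',m}(2\boldsymbol{\vartheta})$ by the very definition of the Wigner matrix. The paper instead computes everything explicitly in the Schwinger representation on the unreduced space $\widetilde{\mathcal{H}}_{\partial_{\mathbf{z}}}(T^*\mathbb{R}^2)$: each flow $\widetilde{\rho}^{\tau_i}_\chi$ acts as a linear (real orthogonal or complex) rotation of the creation operators $a_x^\dag,a_y^\dag$ in the basis \eqref{eqn:relatedBasis}, and a binomial expansion of $\bigl(a_x^\dag\cos\chi-a_y^\dag\sin\chi\bigr)^{j+m}\bigl(a_y^\dag\cos\chi+a_x^\dag\sin\chi\bigr)^{j-m}$ produces the small $d$-matrix directly, after which the Euler-angle composition assembles $\mathscr{D}$. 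Your route is shorter and makes the representation-theoretic content transparent, but it defers to ``bookkeeping'' exactly the step the explicit computation supplies: verifying that the middle rotation axis $\tau_3$ plays the role of $J_y$ (and not $J_x$, which would introduce extra phases), that the sign and factor of $2$ in the exponents match the convention of $\mathscr{D}$ in the cited reference, and that the half-form factor $\sqrt{dz_x\wedge dz_y}$ transforms trivially under the flows. The paper's calculation settles these conventions constructively and also exhibits the $\tau_1$ flow for completeness, whereas your argument would need at minimum the one-line check identifying $\tau_3$ with the real rotation generator before Schur-type uniqueness pins the matrix elements down; with that check made explicit, the two proofs are equivalent in rigour.
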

\begin{proof}
    We construct the equivalent action on $\widetilde{\mathcal{H}}_{\partial_{\mathbf{z}}}(T^*\mathbb{R}^2)$, the evaluation of which becomes standard when viewed in the basis $|j,m\rangle$. Each $\widetilde{\tau}_i$ acting on $\widetilde{\mathcal{H}}_{\partial_{\mathbf{z}}}(T^*\mathbb{R}^2)$ generates a flow with respect to a parameter $\chi$
    \begin{equation*}
    \widetilde{\rho}^{\,\tau_i}_\chi|j,m\rangle = \frac{\left(\rho^{\tau_i}_\chi a^\dag_x\right)^{j+m}\left(\rho^{\tau_i}_\chi a^\dag_y\right)^{j-m}}{\sqrt{(j+m)!(j-m)!}}\sqrt{\left(\rho^{\tau_i}_\chi\right)^*dz_x \wedge dz_y}\,\,\,.
\end{equation*}
While only two orthogonal rotations in the Euler angle prescription are needed to represent a generic rotation, for completeness we compute all the flows. The action $(\rho^{\tau_i}_\chi)^*\sqrt{dz_x \wedge dz_y}$ is always trivial and may be disregarded. In stating these results we will also move freely between the bases in \eqref{eqn:relatedBasis}. The simplest flow to compute is that generated by $\tau_2$, as the basis \eqref{eqn:relatedBasis} is already diagonal, it immediately follows that
\begin{equation}
\widetilde{\rho}^{\,\tau_2}_\chi|j,m\rangle = e^{2im\chi}|j,m\rangle.
\end{equation}
Next, $\tau_3$ generates rotations between the complex coordinates
\begin{equation}
   \begin{split} \widetilde{\rho}^{\,\tau_3}_\chi|j,m\rangle &= \frac{\left(a^\dag_x\cos\chi-a^\dag_y\sin\chi\right)^{j+m}\left(a^\dag_y\cos\chi+a_x^\dag\sin\chi\right)^{j-m}}{\sqrt{(j+m)!(j-m)!}}\sqrt{dz_x \wedge dz_y}\\
    &=\sum_{m^\prime=-j}^jd^{j}_{m^\prime,m}(2\chi)|j,m^\prime \rangle,
    \end{split}
\end{equation}
expressible in terms of Wigner's small $d$-matrix $d^{(j)}_{m^\prime,m}(\chi)$, given by
\begin{equation}
\begin{split}
    d^j_{m^\prime,m}(\alpha)=\sum_k(-1)^{k-m+m^\prime}\frac{\sqrt{(j+m)!\,(j-m)!\,(j+m^\prime)!\,(j-m^\prime)!}}{(j+m-k)!\,k!\,(j-k-m^\prime)!\,(k-m+m^\prime)!}\\
    \times\cos\left(\frac{\alpha}{2}\right)^{2j-2k+m-m^\prime}\sin\left(\frac{\alpha}{2}\right)^{2k-m+m^\prime}.
    \end{split}
\end{equation}
See \cite{sakurai2014modern} for a detailed calculation. Finally, $\tau_1$ generates complex rotation of the $z_i$
\begin{equation}
\widetilde{\rho}\,^{\tau_1}_\chi|j,m\rangle =\frac{\left(a^\dag_x\cos\chi+i a_y^\dag\sin\chi\right)^{j+m}\left(a^\dag_y\cos\chi+i a_x^\dag\sin\chi\right)^{j-m}}{\sqrt{(j+m)!\,(j-m)!}}|0\rangle.
\end{equation}
We compose the two simplest rotations to construct an arbitrary rotation
\begin{align*}
    \widetilde{R}(\vartheta_1,\vartheta_2,\vartheta_3)|j,m\rangle &= \left(\widetilde{\rho}^{\,\tau_2}_{\vartheta_1}\circ \widetilde{\rho}^{\,\tau_3}_{\vartheta_2}\circ \widetilde{\rho}^{\,\tau_2}_{\vartheta_3} \right)|j,m\rangle\\
    &= \sum_{m^\prime=-j}^je^{2im\vartheta_3}\,d^{j}_{m^\prime,m}(2\vartheta_2)\,e^{2im\vartheta_1}|j,m^\prime \rangle\\
    &= \sum_{m^\prime=-j}^j\mathscr{D}^j_{m^\prime,m}(2\boldsymbol{\vartheta})|j,m^\prime \rangle
\end{align*}

\end{proof}

\subsection{Rotating the Modal Poincar\'e Sphere and the Generalised Hermite-Laguerre-Gauss Modes}

When the image of $\widetilde{R}|j,m\rangle$ is evaluated over the $\mathbb{R}^2$ as an element of $\widetilde{\mathcal{H}}_{\partial_{\mathbf{p}}}$, a connection to structured light emerges. The unitary map $\widetilde{\mathcal{H}}_{\partial_{\mathbf{p}}}\rightarrow \widetilde{\mathcal{H}}_{\partial_{\mathbf{z}}}(T^*\mathbb{R}^2)$ is the inverse Segal-Bargmann transformation, see \cite{woodhouse1992geometric} for a derivation in the context of a pairing map in geometric quantisation. In the re-scaled variables $z_i=-i\xi_i\sqrt{2\hbar}$ and $q_i=x_i\sqrt{\hbar}$ we obtain the standard expression of the transform \cite{bargmann1961hilbert,hall2013quantum}.
Denote the transform by $P$, it is well known that the monomial basis maps under $P$ to a basis of Gaussian-weighted Hermite polynomials
\begin{equation}
    (P|j,m\rangle)(q_x,q_y) = \frac{\lambda_j}{\sqrt{(j+m)!\,(j-m)!\,\pi}}\;H_{j+m}\left(\frac{q_x}{\sqrt{\hbar}}\right)\,H_{j-m}\left(\frac{q_y}{\sqrt{\hbar}}\right)e^{-\frac{1}{2\hbar}(q_x^2+q_y^2)},
\end{equation}
up to some constant $\lambda_j$.
In keeping with convention in the optics literature \cite{gbur2016singular}, we define the (un-normalised) \textit{Hermite-Gauss} modes as
\begin{equation}
    \mathscr{HG}_{j+m,j-m}(q_x,q_y) = (-i)^{m-j}\;H_{j+m}\left(\frac{q_x}{\sqrt{\hbar}}\right)\,H_{j-m}\left(\frac{q_y}{\sqrt{\hbar}}\right)e^{-\frac{1}{2\hbar}(q_x^2+q_y^2)}.
\end{equation}
These functions are eigenfunctions of $\widetilde{H}$ and $J$ and may be extended to solutions of \eqref{eqn:pwe} \cite{gbur2016singular}, $J$ remains identified with $\tau_2\in C^\infty(T^*\mathbb{R}^2)$. It is straightforward to show that $\tau_2$ is the separation constant appearing when solving \eqref{eqn:pwe} in Cartesian coordinates in the $z=0$ plane. Under the action of $\widetilde{R}$ however, $\widetilde{R}|j,m\rangle$ is an eigenfunction of $\widetilde{\mathcal{H}}$ and $\widetilde{R}\,J\widetilde{R}^{-1}$, where the latter is a linear combination of the $\widetilde{\tau}_i$ but still corresponding to the quantisation of a particular function of $T^*\mathbb{R}^2$. For special rotations $\widetilde{R}\,J\widetilde{R}^{-1}$ becomes $\widetilde{\tau}_3$, corresponding to angular momentum as the separation constant or $\cos(\vartheta)\widetilde{\tau}_2-\sin(\vartheta)\widetilde{\tau}_1$, the separation constant of a rotated Cartesian coordinate system. The general linear combination of $\tau_i$ does not correspond to the separation constant of any coordinate system. The GG modes \cite{abramochkin2004generalized,abramochkin2010generalized} are the eigenfunctions of quantisation of these mixed separation constants.

\begin{theorem}\label{thm:main}
Define the GG mode of Gaussian mode order $j$, index $m$ and angles $0\leq\vartheta_1\leq \pi$ and $0\leq\vartheta_2\leq 2\pi$, denoted $\mathcal{G}_{j,m}(\vartheta_1,\vartheta_2)$ as in \cite{abramochkin2004generalized}. The $\mathcal{G}_{j,m}$ satisfy
    \begin{equation*}
        \mathcal{G}_{j,m}(\vartheta_1,\vartheta_2;q_x,q_y) = \left(P\,\widetilde{R}(\vartheta_1,\vartheta_2,0)\,|j,m\rangle\right)(q_x,q_y)
    \end{equation*}
\end{theorem}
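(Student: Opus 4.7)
The plan is to combine the preceding theorem (the Wigner $D$-matrix formula for $\widetilde{R}$) with the known image of the holomorphic monomial basis under the inverse Segal--Bargmann transform $P$ already stated in the excerpt, and then recognise the resulting finite linear combination of Hermite--Gauss modes as precisely Abramochkin's definition of the GG modes.

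First I would specialise the previous theorem to $\vartheta_3=0$, which, inserting $\mathscr{D}^j_{m',m}(2\vartheta_1,2\vartheta_2,0)=e^{2im'\vartheta_1}\,d^j_{m',m}(2\vartheta_2)$, gives
\begin{equation*}
\widetilde{R}(\vartheta_1,\vartheta_2,0)|j,m\rangle = \sum_{m'=-j}^{j} e^{2im'\vartheta_1}\, d^j_{m',m}(2\vartheta_2)\,|j,m'\rangle.
\end{equation*}
By linearity of $P$, the computation of $(P\,\widetilde{R}(\vartheta_1,\vartheta_2,0)|j,m\rangle)(q_x,q_y)$ is then reduced to evaluating $P$ on each monomial state $|j,m'\rangle$ appearing in the sum.

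Next, I would invoke the explicit action of $P$ recorded just before the theorem statement, which sends $|j,m'\rangle$ to a Gaussian times a product of Hermite polynomials. Using the paper's normalisation $\mathscr{HG}_{j+m',j-m'}=(-i)^{m'-j}H_{j+m'}(q_x/\sqrt{\hbar})H_{j-m'}(q_y/\sqrt{\hbar})\,e^{-(q_x^2+q_y^2)/(2\hbar)}$, substitution yields
\begin{equation*}
\bigl(P\,\widetilde{R}(\vartheta_1,\vartheta_2,0)|j,m\rangle\bigr)(q_x,q_y) = \sum_{m'=-j}^{j} C_{j,m,m'}(\vartheta_1,\vartheta_2)\,\mathscr{HG}_{j+m',j-m'}(q_x,q_y),
\end{equation*}
with coefficients $C_{j,m,m'}$ of the form $e^{2im'\vartheta_1}\, d^j_{m',m}(2\vartheta_2)$ times the explicit $\lambda_j$, factorial, and $(-i)^{j-m'}$ normalisations.

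The remaining and main step, which I expect to be the principal technical obstacle, is to match this expansion termwise with the definition of $\mathcal{G}_{j,m}(\vartheta_1,\vartheta_2)$ given in \cite{abramochkin2004generalized}. That definition expresses the GG modes as an expansion in Hermite--Gauss modes with coefficients that are matrix entries of an $SU(2)$ rotation parametrised by the same pair of angles, so the comparison is structural rather than analytic. The work is in reconciling conventions between the two sources, for the sphere parametrisation, the ordering of Euler angles, the HG indexing, and the factor $(-i)^{m-j}$ absorbed into our $\mathscr{HG}$; the latter is chosen precisely to match Abramochkin's phase so that the termwise identification produces the claimed equality. Once this conventional bookkeeping has been carried out and the normalisation $\lambda_j$ fixed by requiring $P$ to be unitary, the two finite expansions coincide and the theorem follows.
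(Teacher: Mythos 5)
Your proposal is correct and follows essentially the same route as the paper: expand $\widetilde{R}(\vartheta_1,\vartheta_2,0)|j,m\rangle$ via the Wigner $D$-matrix theorem, push the sum through $P$ termwise onto the $\mathscr{HG}$ modes, and identify the resulting coefficients with the $SU(2)$ interbasis expansion defining the GG modes, the work being convention bookkeeping. The paper merely outsources that last comparison to equation (7.6) of \cite{dennis2019gaussian} (with $\hbar=1/k$) instead of Abramochkin's original formula, which is the same structural matching you describe.
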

\begin{proof}
    After substituting in the explicit form of $\mathscr{HG}$ , the proof follows by comparison of the expression with equation (7.6) of \cite{dennis2019gaussian} and the substitution $\hbar=1/k$.
\end{proof}

Note that this rotation is path-dependent. In \cite{dennis2019gaussian}, the authors use the freedom of the unused $\vartheta_3$ rotation to compensate for the geometric phase encountered. GG modes with indices $j,m$ may thus be identified with points on $S^2$. This is exactly the modal Poincar\'e sphere \cite{dennis2019gaussian}, an analogy with the Poincar\'e sphere describing optical polarisation. This demonstrates that the $\mathcal{G}_{j,m}(\vartheta_1,\vartheta_2)$ for a fixed $\vartheta_1, \vartheta_2$ form the natural basis of the corrected quantum Hilbert space (in position basis) when the polarisation is given by level sets of $R\, \tau_2\, R^{-1}$.

\section{Conclusion}
That the underlying symmetries of Gaussian modes admit this geometric representation over $S^2$ has long been a central theme of work on structured light. We have clarified the mathematical framework necessary to show how the modal Poincar\'e sphere emerges naturally from the geometric quantisation of an underlying harmonic oscillator. While in retrospect much of this can be seen as a consequence of the representation theory of $SU(2)$, especially in light of Kirillov's \textit{Orbit Method} \cite{kirillov2004lectures}, this concept continues to flourish in the optical community. We have placed this model in a rich geometric and topological setting, where the main symmetries are directly manifest. In the future, we hope to extend this work to other families of structured light beams, corresponding to different reduced phase spaces with more complicated polarisations. Recently, variations on this model which account for additional optical degrees of freedom, for example, the optical hypersphere \cite{sugic2021particle}, have already emerged and may benefit from being incorporated into this analysis.

\subsubsection*{Acknowledgments}
The author thanks Mark Dennis for the inspiration that grew into this project, as well as for his supervision in its early stages. The author also thanks Lisa Jeffrey for valuable comments on an early manuscript.

\subsubsection*{Funding}
This work was supported by the EPSRC Centre for Doctoral
Training in Topological Design (EP/S02297X/1) and partially supported by a Mitacs Globalink Research Award (IT40628).

\printbibliography

\end{document}